\newtheorem{definition}{Definition}
\newtheorem{proposition}{Proposition}
\newcommand{%
  \begin{figure}[tb]
    \centering
		\resizebox{\columnwidth}{!}{\input{}}
    \vglue 0ex plus 0.5ex minus 0.5ex
		\vspace{-1em}
    {\caption{\small .}\label{}}%
  \end{figure}}[4][tb]{%
  \begin{figure}[#1]
    \centering
		\resizebox{#2\columnwidth}{!}{\input{#3}}
    \vglue 0ex plus 0.5ex minus 0.5ex
		\vspace{-1em}
    {\caption{\small #4.}\label{#3}}%
  \end{figure}}
\newcommand{\bsfigurewidth}[4][tb]{%
  \begin{figure}[#1]
    \centering
    \includegraphics[width=#2\columnwidth]{#3}
    \vglue 0ex plus 0.5ex minus 0.5ex
    {\caption{\small #4.}\label{#3}}%
  \end{figure}}
\title{Security Decisions for Cyber-Physical Systems based on Solving Critical Node Problems with Vulnerable Nodes}
\author {
    Jens Otto\textsuperscript{\rm 1},
    Niels Grüttemeier\textsuperscript{\rm 1},
    Felix Specht\textsuperscript{\rm 1}
}
\begin{document}
\maketitle

\begin{abstract}
  Cyber-physical production systems consist of highly specialized software and hardware components. Most components and communication protocols are not built according to the Secure by Design principle. Therefore, their resilience to cyberattacks is limited. This limitation can be overcome with common operational pictures generated by security monitoring solutions. These pictures provide information about communication relationships of both attacked and non-attacked devices, and serve as a decision-making basis for security officers in the event of cyberattacks. The objective of these decisions is to isolate a limited number of devices rather than shutting down the entire production system. In this work, we propose and evaluate a concept for finding the devices to isolate. Our approach is based on solving the \textsc{Critical Node Cut Problem with Vulnerable Vertices (CNP-V)}---an NP-hard computational problem originally motivated by isolating vulnerable people in case of a pandemic. To the best of our knowledge, this is the first work on applying CNP-V in context of cybersecurity.
\end{abstract}

\section{Introduction}
Cyber-physical systems combine physical processes with computation~\cite{7437398} and are networks of software and hardware components~\cite{lee2008cyber}. A subset of cyber-physical systems are cyber-physical production systems, they are in the focus of initiatives such as the Germany's platform Industrie 4.0 or the US Industrial Internet Consortium. Their main feature is flexibility to quickly adapt to new plant topologies~\cite{rehberger2016agent} and the integration of value-added services such as condition monitoring~\cite{specht_ae_2018}, or optimization~\cite{ottoAAAI,ottoTII}. This requires a high degree of interconnections between components~\cite{Wollschlaeger.2017}.

Cyber-physical production systems consist of highly specialized software and hardware components, such as programmable logic controller (PLCs), input or output devices (IO-Devices), and appliances. However, most components and communication protocols are not built according to the Secure by Design principle which limits their resilience to cyberattacks~\cite{specht_sdn_2022}. This limitation can be overcome with common operational pictures. They summarize the situation, update data on changing situations, exchanges data with internal and external systems, and collects information~\cite{kim2023study}. Security officers can then use these operational pictures to make decisions in the event of cyberattacks. Figure~\ref{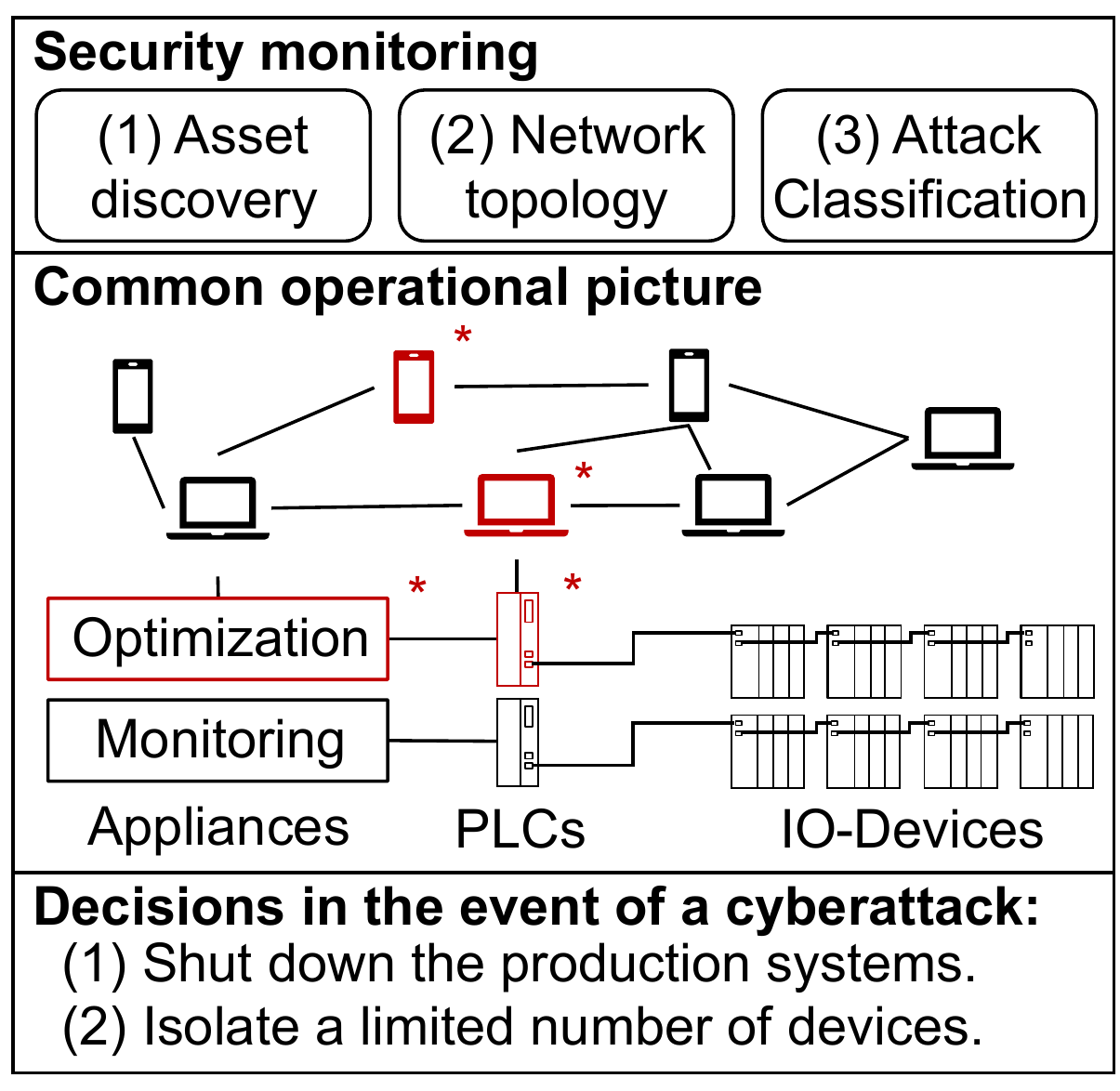} illustrates the relation between security monitoring, common operational pictures and decisions. Security monitoring systems consist of three main components.

\vspace{-0.5em}
\bsfigurewidth[ht]{0.6}{fig_overview.pdf}{Scope of this work: relation between security monitoring, common operational picture and decisions}

\vspace{-1em}
\textbf{Asset discovery:} Passive and active network analysis extract information about the used hardware and software components. Passive network analysis identifies basic component information, e.g. IP addresses and used protocols. Then active methods are used to gain specific component information, e.g. article number and device class.
Please note that only devices are considered in the following.

\medskip
\textbf{Network topology:} Passive network analysis is used to identify the communication relations between devices. Please note that specific communication protocols, such as SSH, OPC UA, Profinet or Ethercat, and their mapping to communication relations are not part of this work, only the relation knowledge is used.

\medskip
\textbf{Attack classification:} Cyberattacks are detected and classified according to the MITRE ATT\&CK framework~\cite{al2020learning,spechtINDIN2023}. Please note, that specific cyberattacks are not part of this work, only the knowledge of which devices are part of a of a cyberattack is used. The involved devices are marked with * (see.~Figure~\ref{fig_overview.pdf}) and named attacked devices.

The common operational picture consolidates information from asset discovery, network topology, and attack classification. It provides an overview of the devices, their communication relations and their classification as attacked or not-attacked. In the event of a cyberattack, a security officer can choose between two decisions:

\medskip
\textbf{Decision 1 (Worst-case):} The security officer shuts down the production system. It is the worst-case decision, because depending on the production process, it takes days or weeks to restart the production system.

\medskip
\textbf{Decision 2 (Best-case):} The security officer isolates only a limited number of devices. It is the best-case decision, because the production is not shut down. While the intuitive solution is to remove all attacked devices marked with a *, it is not always an optimal decision.

\medskip
In this work, a new computational problem, named \textsc{Security Node Problem with Vulnerable Vertices (SNP-V)}, is introduced to model this decision task. The solutions provided by SNP-V enable a security officer to isolate certain devices in order to reduce the impact of a cyberattack on a production system.

\section{Related Work}
This section describes the related work by initially introducing relevant approaches within the context of security applications, followed by the background of the \textsc{Critical Node Problem}.

\medskip
Security games have commonly been used for modeling security applications and determining optimal strategies for both defenders and adversaries~\cite{rosenfeld2017security, kamra2018policy}. A particular variant is network security games, which focus on the allocation of resources to network nodes or edges~\cite{wang2017non}. The concept of sharing defensive resources and reallocating resources on nodes in network security games has been introduced as a strategy to defend against contagious attacks~\cite{li2020defending, bai2021defending}. The application of deep learning to network security games has been increasingly proposed, aiming to discover optimal resource utilization strategies~\cite{xue2022nsgzero, li2023solving}. In network security, one commonly considered problem is the graph reachability reduction, where defenders aim to eliminate specific nodes or edges in order to prevent attacks~\cite{sheyner2002automated}. Zheng et al.~\citeyear{zheng2011active} introduce a solution based on binary edge classification for unknown edge costs. An associated approach involves utilizing the inverse geodesic length metric to identify and eliminate the most critical nodes or edges within a compromised network~\cite{najeebullah2018complexity, gaspers2019optimal}. Guo et al.~\citeyear{guo2022practical, guo2023scalable} investigate edge blocking in network graphs and propose different solution strategies based on both reinforcement learning and mixed-integer programming.

\medskip
The \textsc{Critical Node Problem (CNP)} is an NP-complete graph problem~\cite{A09}. It is studied extensively from an algorithmic point of view~\cite{SGL11,ASG13,HKKN16}. The problem finds application in several fields: In road networks it is used to plan emergency evacuations in disaster cases~\cite{VOTM11}; In context of cybersecurity it can be used to detect important vertices in a network that need to be protected~\cite{LTK18}. The \textsc{Critical Node Problem with Vulnerable Vertices (CNP-V)} is a generalization of the critical node problem and is originally motivated by isolating a vulnerable group of people in a social network in case of a pandemic~\cite{SGKS22}. So far, CNP-V has been studied from a purely theoretical point of view.

\section{Problem Definition}
This section describes the computational problem, named \textsc{Security Node Problem with Vulnerable Vertices (SNP-V)}. The problem has to be solved by an algorithm for automatic calculation of the devices to be isolated. SNP-V solutions enable a security officer to decide which devices need to be isolated to reduce the impact of a cyberattack on a production system.

Definition~\ref{def:graph} introduces the used graph notation that describes devices and connections. Definition~\ref{def:connections} describes direct and indirect connections between devices.

\medskip
\begin{definition}\label{def:graph}
    A \emph{graph} $G=(V,E)$ is a tuple consisting of a \emph{device set} $V$ and a \emph{connection set} $E$, where each~$e \in E$ is a two-element subset of~$V$. Given a subset~$C \subseteq V$ of devices that have to be removed, we let $G-C$ denote the graph obtained by deleting every device in~$C$ together with its incident connections.      
\end{definition}

\medskip
\begin{definition}[Connections] \label{def:connections}
    Let~$G=(V,E)$ be a graph, and let~$u \in V$ and~$v \in V$ be two devices. The devices~$u$ and~$v$ are called
    {\setlength{\leftmargini}{2em}
    \begin{enumerate}[label=(\roman*)]
        \item \emph{directly connected} if~$\{u,v\} \in E$,
        \item \emph{indirectly connected} if~$\{u,v\} \not \in E$, but there is a path from~$u$ to~$v$ in~$G$, and
        \item \emph{connected} if they are directly connected or indirectly connected.
    \end{enumerate}}
\end{definition}

\medskip
Next, assume that a subset of devices~$A \subseteq V$ is under attack, named \emph{attacked devices}. In this case, it is arguably problematic if an \emph{attacked device} is connected to any other device. This idea of potentially problematic connections is formalized in Definition~\ref{def:A-vul}: Intuitively, every device $v \in V$, which is directly or indirectly connected to an \emph{attacked device}~$a \in A$ forms a vulnerable connection with the attacked~device.

\medskip
\begin{definition}[$A$-vulnerability] \label{def:A-vul}
    Let $G=(V,E)$ be a graph and let $A \subseteq V$ be a subset of \emph{attacked devices}.
    A pair of devices~$u$ and~$v$ forms a \emph{vulnerable connection}~if
    \medskip
    {\setlength{\leftmargini}{2em}
    \begin{enumerate}[label=(\roman*)]
        \item $u$ and~$v$ are connected according to Definition~\ref{def:connections}, and
        \item $u \in A$ or~$v \in A$.
    \end{enumerate}}

    The \emph{$A$-vulnerability} of~$G$ is the number of device pairs forming a vulnerable connection in~$G$.
\end{definition}

\medskip
Definition~\ref{def:cnp-v} states the computational problem, named \textsc{Critical Node Problem with Vulnerable Vertices (CNP-V)}~\cite{SGKS22}. In CNP-V, one aims to decrease the number of \emph{vulnerable connections} according to Definition~\ref{def:A-vul}. The budget value $k \in \mathds{N}$ describes the number of removable devices from the graph in order to reduce the number of \emph{vulnerable connections}. The goal is, to remove these~$k$ devices in a way, such that the~\emph{$A$-vulnerability} of the resulting connection graph is minimal.

\medskip
\begin{definition}[CNP-V] \label{def:cnp-v}
    In the \textsc{Critical Node Problem with Vulnerable Vertices}, the input is a graph $G=(V,E)$, a set $A \subseteq V$, and an integer $k$. The task is to find a set~$C \subseteq V$ of size at most $k$, such that the~$A$-vulnerability of $G-C$, according to Definition~\ref{def:A-vul}, is minimal.
\end{definition}

\medskip
The set~$C$ from Definition~\ref{def:cnp-v} is called the \emph{solution} of a \textsc{CNP-V} instance~$(G,A,k)$. Even though it appears to be natural to exclusively remove \emph{attacked devices}, a \emph{solution} might in fact also contain non-attacked devices if a small budget~$k < |A|$ is given. Figure~\ref{fig_example.tex} illustrates a \emph{solution} containing a non-attacked device. The upper part shows a possible input graph for \textsc{CNP-V}. The red devices correspond to \emph{attacked devices~$A$}. The graph has 21 vulnerable connections: Three vulnerable connections with two \emph{attacked devices}, and~$3 \cdot 6 = 18$ connections with exactly one \emph{attacked device}. The lower part shows an optimal \emph{solution} for budget~$k=2$. Isolated devices are marked with~$\times$. After the isolation, the graph becomes disconnected and remains only three connected device pairs including an \emph{attacked device}.

\vspace{-0.5em}
  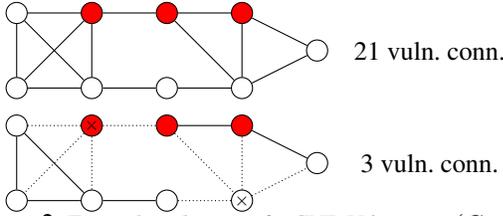
\begin{figure}[ht]
    \centering
		\resizebox{0.8\columnwidth}{!}{\begin{tikzpicture}
    \tikzstyle{knoten}=[circle,fill=white,draw=black,minimum size=8pt,inner sep=0pt]
    \tikzstyle{bez}=[inner sep=0pt]
    
    \node[knoten] (1)  at (0,1) {};
    \node[knoten] (2)  at (0,0) {};
    \node[knoten,fill=red] (3)  at (1,1) {};
    \node[knoten] (4)  at (1,0) {};
    
    \node[knoten,fill=red] (5)  at (2,1) {};
    \node[knoten] (6)  at (2,0) {};
    
    \node[knoten,fill=red] (7)  at (3,1) {};
    \node[knoten] (8)  at (3,0) {};
    \node[knoten] (9)  at (4,0.5) {};
    
    \draw[-]  (1) to (2);
    \draw[-]  (1) to (3);
    \draw[-]  (1) to (4);
    \draw[-]  (3) to (2);
    \draw[-]  (4) to (2);
    \draw[-]  (3) to (4);
    
    \draw[-]  (3) to (5);
    \draw[-]  (5) to (8);
    \draw[-]  (5) to (7);
    \draw[-]  (4) to (6);
    \draw[-]  (6) to (8);
    
    \draw[-]  (8) to (7);
    \draw[-]  (9) to (7);
    \draw[-]  (8) to (9);

    \node[bez] at (5.5,0.5) {21 vuln. conn.};
    
    \begin{scope}[yshift=-1.5cm]
    
    \node[knoten] (1)  at (0,1) {};
    \node[knoten] (2)  at (0,0) {};
    \node[knoten,fill=red] (3)  at (1,1) {\tiny{$\times$}};
    \node[knoten] (4)  at (1,0) {};
    
    \node[knoten,fill=red] (5)  at (2,1) {};
    \node[knoten] (6)  at (2,0) {};
    
    \node[knoten,fill=red] (7)  at (3,1) {};
    \node[knoten] (8)  at (3,0) {\tiny{$\times$}};
    \node[knoten] (9)  at (4,0.5) {};
    
    \draw[-]  (1) to (2);
    \draw[-,densely dotted]  (1) to (3);
    \draw[-]  (1) to (4);
    \draw[-,densely dotted]  (3) to (2);
    \draw[-]  (4) to (2);
    \draw[-,densely dotted]  (3) to (4);
    
    \draw[-,densely dotted]  (3) to (5);
    \draw[-,densely dotted]  (5) to (8);
    \draw[-]  (5) to (7);
    \draw[-]  (4) to (6);
    \draw[-,densely dotted]  (6) to (8);
    
    \draw[-,densely dotted]  (8) to (7);
    \draw[-]  (9) to (7);
    \draw[-,densely dotted]  (8) to (9);
    
    \node[bez] at (5.5,0.5) {3 vuln. conn.};
    \end{scope}
\end{tikzpicture}}
    \vglue 0ex plus 0.5ex minus 0.5ex
		\vspace{-1em}
    {\caption{\small Example \emph{solutions} of a CNP-V instance~$(G,A,k)$.}\label{fig_example.tex}}%
  \end{figure}

\vspace{-1.5em}
\subsection{Security Node Problem with Vulnerable Vertices}
This section introduces the \textsc{Security Node Problem with Vulnerable Vertices (SNP-V)}. It is a new variant of CNP-V by adding a secondary optimization goal to improve the stability of the remaining network after isolating devices. The motivation of the secondary optimization goal is the lack of uniqueness for the \emph{solution} of a CNP-V instance~$(G,A,k)$. More precisely, a lexicographic optimization is used. Among all solutions that minimize vulnerability, one aims to find the one that provides the best connectivity of the remaining graph.

\medskip
Figure~\ref{fig_no_unique.tex} illustrates an example with two possible \emph{solutions}~$C_1 = \{v_1\}$ and~$C_2=\{v_2\}$.
Even though, both \emph{solutions} provide the same~\emph{$A$-vulnerability}, isolating~$v_2$ also disconnects the non-attacked device~$v_3$ from the remaining graph. Thus, in security applications, deleting~$v_1$ is clearly a better choice. Motivated by this fact, the problem definition of CNP-V is adapted by adding a secondary goal focussing on the connectivity of non-attacked devices.

  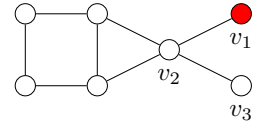
\begin{figure}[ht]
    \centering
		\resizebox{0.4\columnwidth}{!}{\begin{tikzpicture}
    \tikzstyle{knoten}=[circle,fill=white,draw=black,minimum size=8pt,inner sep=0pt]
    \tikzstyle{bez}=[inner sep=0pt]
    
    \node[knoten] (-1)  at (-1,1) {};
    \node[knoten] (0)  at (-1,0) {};
    \node[knoten] (1)  at (0,1) {};
    \node[knoten] (2)  at (0,0) {};
    \node[knoten,label=below:$v_2$] (3) at (1,0.5) {};
    \node[knoten,fill=red,label=below:$v_1$] (4)  at (2,1) {};
    \node[knoten,label=below:$v_3$] (5)  at (2,0) {};

    \draw[-]  (1) to (2);
    \draw[-]  (1) to (3);
    \draw[-]  (2) to (3);
    \draw[-]  (3) to (4);
    \draw[-]  (3) to (5);
    \draw[-]  (1) to (-1);
    \draw[-]  (0) to (2);
    \draw[-]  (-1) to (0);
\end{tikzpicture}}
    \vglue 0ex plus 0.5ex minus 0.5ex
		\vspace{-1em}
    {\caption{\small Example graph of an instance of CNP-V with~$k=1$, where the solution is not unique.}\label{fig_no_unique.tex}}%
  \end{figure}

This idea of healthy connections is formalized in Definition~\ref{def:healthy}: Intuitively, every device $v \in V$ and $v \notin A$, which is directly or indirectly connected to a non-\emph{attacked device} forms a healthy connection.

\medskip
\begin{definition}[$A$-healthiness]\label{def:healthy}
    Let~$G=(V,E)$ be a graph, and let~$A$ be a subset of \emph{attacked devices}.
    A pair of devices~$u$ and~$v$ forms a \emph{healthy connection}, if it is not a vulnerable connection. The~\emph{$A$-healthiness} of~$G$ is the number of device pairs forming a healthy connection in~$G$.
\end{definition}

\medskip
Next a lexicographic optimization is defined. The objective is to find a solution $C$ whose removal results in a graph~$G'=G-C$ which has maximum~\emph{$A$-healthiness} among all~$G'$ with minimum \emph{$A$-vulnerability}. Definition~\ref{def:targetFunc} defines this idea as a new objective function.

\medskip
\begin{definition}[Objective Value] \label{def:targetFunc}
Let~$G=(V,E)$ be a graph and let~$A \subseteq V$. Given a subset~$C \subseteq V$, let $G':=G-C$. Moreover, let
 {\setlength{\leftmargini}{2em}
 \begin{enumerate}[label=(\roman*)]
  \item $\text{\rm{vul}}(G')$ denote the~\emph{$A$-vulnerability} of~$G'$, and let
  \item $\text{\rm{heal}}(G')$ denote the~\emph{$A$-healthiness} of~$G'$.
 \end{enumerate}}
 The \emph{objective value}~$\Phi(G')$ is defined as:
  \begin{align*}
    \Phi(G') := (|V|^2 +1) \cdot \text{\rm{vul}}(G') - \text{\rm{heal}}(G') \in \mathds{Z}
 \end{align*} 
\end{definition}

\medskip
Note that a device set~$C$ minimizing~$\Phi(G')$ in fact provides a solution of our lexicographic optimization: Since~$|V|^2$ is an upper bound of the total number of device pairs, $\text{heal}(G') < |V|^2+1$. Consequently,~$(|V|^2 +1) \cdot \text{vul}(G')$ dominates the term in a way that minimizing~$\Phi(G')$ minimizes~$\text{vul}(G')$. Additionally, the larger the number of healthy connections, the smaller~$(-\text{heal}(G'))$. Therefore, we obtain a solution of our lexicographic optimization problem, when computing a \emph{solution}~$C$ minimizing~$\Phi(G')$.

\medskip
Definition~\ref{def:scnp-v} introduces the \textsc{Security Node Problem with Vulnerable Vertices (SNP-V)}.

\medskip
\begin{definition}[SNP-V] \label{def:scnp-v}
    In the \textsc{Security Node Problem with Vulnerable Vertices (SNP-V)}, the input is a graph $G=(V,E)$, a set $A \subseteq V$, and an integer $k$. The task is to find a \emph{solution}~$C \subseteq V$ of size at most $k$, such that~$\Phi(G')$, according to Definition~\ref{def:targetFunc}, is minimal.
\end{definition}

\section{Solving SNP-V}

\subsection{SNP-V ILP Definition}
This section introduces an Integer Linear Programming (ILP) formulation of SNP-V. This formulation allows the usage of standard solver implementations. It adapts a standard ILP formulation for the \textsc{Critical Node Problem}~\cite{A09}.

\medskip
\begin{definition}[SNP-V-ILP] \label{def:snp-ilp}
    Let~$(G=(V,E),A,k)$ be an instance of~SNP-V. The SNP-V-ILP formulation is defined as follows:
    
    \begin{subequations}
        \begin{align}
        \min    \quad & (|V|^2+1) \cdot \underbrace{ \sum_{\{i,j\} \cap A \neq \emptyset} \: u_{ij}}_{\emph{$A$-vulnerability}} - \underbrace{ \sum_{\{i,j\} \cap A = \emptyset} \: u_{ij} }_{\emph{$A$-healthiness}} \label{eq:def:snp-ilp_subeq1} \\
        \text{s.t.:} 	    \quad & \sum_{i \in V} v_i \leq k \label{eq:def:snp-ilp_subeq2} \\
                            \quad & (1-v_i) + (1-v_j) - 2 u_{ij} \geq 0, \: \forall \{i,j\} \in E \label{eq:def:snp-ilp_subeq3} \\ 
                            \quad & (1-v_i) + (1-v_j) - 2 u_{ij} \leq 1, \: \forall \{i,j\} \in E \label{eq:def:snp-ilp_subeq3b} \\ 
                            \quad & u_{ij} + u_{jk} - u_{ki} \leq 1, \forall (i,j,k) \in V^3 \label{eq:def:snp-ilp_subeq4a} \\
                            \quad & u_{ij} - u_{jk} + u_{ki} \leq 1, \forall (i,j,k) \in V^3 \label{eq:def:snp-ilp_subeq4b} \\
                            \quad & - u_{ij} + u_{jk} + u_{ki} \leq 1, \forall (i,j,k) \in V^3 \label{eq:def:snp-ilp_subeq4c} \\
                            \quad & u_{ij} \in \{0,1\}, \: \forall (i,j) \in V^2 \label{eq:def:snp-ilp_subeq5} \\
                            \quad & v_{i} \in \{0,1\}, \: \forall i \in V \label{eq:def:snp-ilp_subeq6}
        \end{align}
    \end{subequations}
\end{definition}

\medskip
The intuition behind the formulation provided in Definition~\ref{def:snp-ilp} is as follows: There are two types of variables. First, for every device~$i$, there is a variable~$v_i \in\{0,1\}$ corresponding to the choice of isolated devices in the following sense: the variable~$v_i$ is assigned with~$1$ in a solution of the ILP if and only if the corresponding device~$i$ is contained in the solution~$C$ for SNP-V. Second, for every pair~$(i,j) \in V^2$, there is a variable~$u_{ij} \in \{0,1\}$ corresponding to connectivity of devices in the following sense: the variable~$u_{ij}$ is assigned with~$1$ in a solution of the ILP if and only if the corresponding devices~$i$ and~$j$ are connected according to Definition~\ref{def:connections} in the solution graph~$G-C$ of SNP-V. Obviously, the Constraints~\eqref{eq:def:snp-ilp_subeq5} and \eqref{eq:def:snp-ilp_subeq6} define binary domains for the variables.

Constraint~\eqref{eq:def:snp-ilp_subeq2} ensures that the number of selected devices does not exceed the budget $k$.
Constraints~\eqref{eq:def:snp-ilp_subeq3} and~\eqref{eq:def:snp-ilp_subeq3b} ensure that two directly connected devices are connected in the solution if none of these devices is isolated.
The Constraints~\eqref{eq:def:snp-ilp_subeq4a}--\eqref{eq:def:snp-ilp_subeq4c} ensure that the connectivity is transitive, that is, for each~$(i,j,k) \in V^3$ we have~$u_{ik}=1$ if~$u_{ij}=u_{jk}=1$. Finally, the objective~\eqref{eq:def:snp-ilp_subeq1} models the objective function according to Definition~\ref{def:targetFunc}. 

\subsection{Solving SNP-V on Instances with a Small Budget}
This section states another approach for solving SNP-V. This approach is motivated by reducing the running time. Note that applying standard ILP solvers on the formulation given in Definition~\ref{def:snp-ilp} results in a superpolynomial-time algorithm for SNP-V. Since the \textsc{Critical Node Problem} is NP-hard~\cite{A09}, the more general SNP-V is NP-hard as well. Thus, for arbitrary instances, there is presumably no significant running time improvement over the ILP approach. However, in case of SNP-V, one might obtain a potential speedup for many real-world instances: First, it is a reasonable assumption that the deletion budget~$k$ is relatively small in comparison to the total number of devices. In fact, since an instance becomes trivial if we are allowed to delete all attacked devices, we may assume that the deletion budget is never larger than the number of attacked devices. Second, by the authors experience, many real-world networks have many devices with degree one. Herein, a device is a \emph{degree-one device} if it has exactly one incident connection. Therefore, it is highly recommended considering an algorithm that is fast if the budget~$k$ is relatively small and the number of degree-one devices is relatively large.

Schestag et al.~\citeyear{SGKS22} describe an algorithm for CNP-V with running time~$\mathcal{O}(|V|^k \cdot (|V|+|E|))$. Note that only the deletion budget~$k$ appears in the exponent in this running time. The algorithm is straight forward: Let~$(G=(V,E),A,k)$ be an input instance of CNP-V. Iterate over every possible device set~$C \subseteq V$ of size at most~$k$, and compute the~$A$-vulnerability~$s$ of~$G-C$. Finally, return the set~$C$, where~$s$ is minimal. We adapt this approach by computing the objective value from Definition~\ref{def:targetFunc} instead of the~$A$-vulnerability and by limiting the considered device sets~$C \subseteq V$ to sets that do not contain non-attacked degree-one devices. This adaption is described in two steps. First, we formally state and prove that non-attacked degree-one devices might in fact be excluded from the search. Second, we provide the adapted algorithm as pseudocode.

\begin{proposition} \label{Prop:NonAttDeg1}
    Let~$(G=(V,E),A,k)$ be an instance of \textsc{SNP-V}. Furthermore, let~$D \subseteq V \setminus A$ be the set of non-attacked degree-one devices. Then, there exists a solution~$C$ with~$C \cap D = \emptyset$.
\end{proposition}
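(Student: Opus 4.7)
The plan is an exchange argument. I would pick, among all optimal solutions, one $C^*$ that additionally minimizes $|C^* \cap D|$, and show $C^* \cap D = \emptyset$ by contradiction. Assume $d \in C^* \cap D$ and let $n$ denote $d$'s unique neighbor in $G$. The goal is to construct another optimal solution $C'$ with $|C' \cap D| < |C^* \cap D|$, contradicting the choice of $C^*$. Two subcases arise, depending on whether $n$ lies in $C^*$.

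If $n \in C^*$, I take $C' := C^* \setminus \{d\}$. Since $n$ is still deleted, $d$ has no neighbor in $G - C'$, so $d$ is isolated there and contributes nothing to either the $A$-vulnerability or the $A$-healthiness. Hence $\Phi(G - C') = \Phi(G - C^*)$, yielding an optimum with strictly smaller $|C' \cap D|$.

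The interesting case is $n \notin C^*$. Let $T$ be the component of $n$ in $G - C^*$. First I would extract from optimality of $C^*$ against the feasible set $C^* \setminus \{d\}$ that $|T \cap A| \geq 1$: passing from $G - C^*$ to $G - (C^* \setminus \{d\})$ adds exactly the $|T|$ new pairs $\{d\} \times T$, of which $|T \cap A|$ are vulnerable and $|T \setminus A|$ are healthy, so if $|T \cap A| = 0$ then $\Phi$ would strictly drop upon removing $d$ from $C^*$ (recall $n \in T$ forces $|T \setminus A| \geq 1$ in that subcase), contradicting optimality. Now set $C' := (C^* \setminus \{d\}) \cup \{n\}$, of size $|C^*| \leq k$. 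Passing from $G - C^*$ to $G - C'$, vertex $d$ becomes isolated, while deleting $n$ loses every pair containing $n$ and every pair crossing two resulting sub-components of $T \setminus \{n\}$. I then argue that at least one such lost pair is vulnerable: if $T = \{n\}$ no pairs change at all, and in that subcase $n \in A$ follows from $|T \cap A| \geq 1$; if $|T| \geq 2$, picking any attacked $a \in T$ shows that either $(n,v)$ is vulnerable for every $v \in T \setminus \{n\}$ (when $a = n$) or the lost pair $(n, a)$ is vulnerable (when $a \neq n$). Because the $A$-healthiness is bounded by $|V|^2$ while the $A$-vulnerability is weighted by $|V|^2 + 1$ in $\Phi$, a single vulnerable loss outweighs any number of healthy losses, and so $\Phi(G - C') \leq \Phi(G - C^*)$. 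Finally $n \notin D$: otherwise $n$'s unique neighbor would be $d$, yielding $T = \{n\}$ with $n \notin A$, contradicting $|T \cap A| \geq 1$. Therefore $|C' \cap D| = |C^* \cap D| - 1$, as required.

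The main obstacle is the $n \notin C^*$ case: a naive ``just delete $d$ from $C^*$'' can raise $\Phi$ by reconnecting $d$ to attacked vertices through $n$, so the exchange must be the swap $d \leftrightarrow n$. Making this swap go through rests on the two consequences of optimality listed above---that $T$ contains an attacked vertex and that $n$ is not itself a non-attacked degree-one device---after which the $|V|^2 + 1$ multiplier in $\Phi$ makes any vulnerable loss dominate any number of healthy losses and closes the argument.
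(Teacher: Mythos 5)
Your proposal is correct and follows essentially the same exchange argument as the paper: swap the non-attacked degree-one device for its unique neighbor, observe that the neighbor's component must contain an attacked device (else dropping the leaf from the solution would already improve $\Phi$), and use the $(|V|^2+1)$ weighting so that one destroyed vulnerable connection dominates any change in healthiness. The only cosmetic difference is that you phrase it as an extremal choice of optimal solution minimizing $|C\cap D|$ rather than as a transformation applied repeatedly, and you split cases on whether the neighbor is already deleted rather than on whether it is attacked; the substance is identical.
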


\begin{proof}
Let~$C$ be a set of devices with~$C \cap D \neq \emptyset$. We describe how to transform~$C$ into a set~$C'$ with
\begin{enumerate}
    \item[$a)$] $|C'| \leq |C|$,
    \item[$b)$] $\Phi(G-C') \leq \Phi(G-C)$, and
    \item[$c)$] $|C' \cap D| < |C \cap D|$.
\end{enumerate}
Note that this implies the statement as repeatedly applying this transformation converts any solution into a solution not containing devices from~$D$. Since~$C \cap D \neq \emptyset$, there exists some~$v \in C \cap D$. If~$\Phi(G-(C\setminus\{v\})) \leq \Phi(G-C)$, then~$C':=C \setminus \{v\}$ clearly satisfies~$a)$,~$b)$, and~$c)$. Thus, we consider the case where~$\Phi(G-(C\setminus\{v\})) > \Phi(G-C)$. Since removing non-attacked devices never increases the healthiness, we conclude that~$G-(C \setminus \{v\})$ contains some attacked device forming a vulnerable connection with~$v$. We thus have~$|A \cap X| \geq 1$, where~$X$ denotes the connected component in~$G-(C\setminus\{v\})$ that contains~$v$. Since~$v$ has degree one, it has a unique neighbor~$w \in X$. We next show that deleting~$w$ instead of~$v$ provides a device set satisfying~$a)$,~$b)$, and~$c)$. We define~$C':= C \setminus \{v\} \cup \{w\}$. Note that~$|C'|=|C|$. It remains to prove~$b)$ and~$c)$. To this end, consider the following cases:

\medskip
\textbf{Case 1:} $w$ is non-attacked \textbf{.}
Then, since there exists some attacked device in~$X$, the device~$w$ has at least two neighbors. Therefore,~$|C' \cap D| < |C \cap D|$. Moreover, note that removing~$w$ also destroys all~$|A \cap X|$ vulnerable connections including~$w$. Thus, we have
\begin{align*}
    \text{vul}(G-C') \leq \text{vul}(G-C) - |A \cap X|.
\end{align*}
Since~$|A \cap X| \geq 1$, we have~$\Phi(G-C')<\Phi(G-C)$.

\medskip
\textbf{Case 2:} $w$ is attacked \textbf{.}
Then,~$w \not \in D$ and therefore~$|C' \cap D| < |C \cap D|$.
Moreover, note that removing~$w$ also destroys~$|X|-2$ vulnerable connections with the devices in~$X \setminus \{v,w\}$. Thus, we have
\begin{align*}
    \text{vul}(G-C') \leq \text{vul}(G-C) - (|X|-2).
\end{align*}
If~$|X| > 2$ we have~$\text{vul}(G-C') < \text{vul}(G-C)$ and therefore~$\Phi(G-C')<\Phi(G-C)$. If~$|X|=2$, then~$X=\{v,w\}$ and both graphs~$G-C$ and~$G-C'$ have the exact same vulnerable and healthy connections. Then,~$\Phi(G-C')=\Phi(G-C)$. Summarizing, it holds that~$\Phi(G-C')\leq \Phi(G-C)$.
\end{proof}

We next provide the algorithm as pseudocode. In Line~\ref{Line:FilterDegOne} of Algorithm~\ref{alg:CyberSeg},~$D$ is set to be the device set containing all non-attacked degree-one devices that may be excluded from the computation according to Proposition~\ref{Prop:NonAttDeg1}. In Line~\ref{Line:SubsetFamily},~$\mathcal{C}$ is set to be the family containing all subsets of~$V \setminus D$ with size at most~$k$ . Afterwards, in Lines~\ref{Line:CyberSeg-StartFor}--\ref{Line:CyberSeg-EndFor}, one computes the objective value~$s$ from Definition~\ref{def:targetFunc} for each~$G-C'$ with~$C' \in \mathcal{C}$ and stores the tuple~$(s,C')$. Finally, the subset~$C$ with minimum score is returned, cf. Line~\ref{Line:CyberSeg-Return}. The details of the computation of the score in Line~\ref{Line:CyberSeg-Score} are described in Algorithm~\ref{alg:CyberSeg-Score}.

\begin{algorithm}[h]
    \caption{CyberSeg-Direct}
    \label{alg:CyberSeg}
    \textbf{Input}: $G=(V,E)$, $A$, $k$\\
    \textbf{Output}: $C$
    \begin{algorithmic}[1]
    \STATE $D \gets $ degree-one devices in~$V \setminus A$ \label{Line:FilterDegOne}
    \STATE $\mathcal{C} \gets \text{combinations}(V \setminus D, k)$ \label{Line:SubsetFamily}
    \STATE $ F \gets \emptyset $
    \FORALL{$C' \in \mathcal{C} $} \label{Line:CyberSeg-StartFor}
        \STATE $G' \gets G - C'$
        \STATE $s \gets \text{score}(G', A) $ \label{Line:CyberSeg-Score}
        \STATE $F \gets F \cup \{(s, C')\}$  
    \ENDFOR \label{Line:CyberSeg-EndFor}
    \STATE \textbf{return} $\underset{C}{\arg\min}(\{ s \mid (s,C) \in F \})$ \label{Line:CyberSeg-Return}
\end{algorithmic}
\end{algorithm}

In Line~\ref{Line:DFS} of Algorithm~\ref{alg:CyberSeg-Score}, the connected components of the input graph are computed using a standard depth-first-search. In Lines~\ref{Line:Score-StartFor}--\ref{Line:Score-EndFor}, one iterates over all connected components. For each such component, one stores the number of vulnerable connections in a variable~$b$ using a formula by Schestag et al.~\citeyear{SGKS22} in Line~\ref{Line:VulInComp}. With~$b$ at hand, the~$A$-vulnerability (Line~\ref{Line:addToVul}) and the~$A$-healthiness (Line~\ref{Line:addToHeal}) of the current component are added to the total~$A$-vulnerability ($A$-healthiness) of~$G$. Finally, the score according to Definition~\ref{def:targetFunc} is returned, cf. Line~\ref{Line:scoreReturn}.

\begin{algorithm}[h]
    \caption{CyberSeg-Score}
    \label{alg:CyberSeg-Score}
    \textbf{Input}: $G=(V,E)$, $A$ \\
    \textbf{Output}: $s$
    \begin{algorithmic}[1]
    \STATE $\hat{C} \gets \text{components}(G)$ \label{Line:DFS}
    \STATE $g, h \gets 0$
    \FORALL{$(V',E') \in \hat{C} $} \label{Line:Score-StartFor}
    \STATE $a_c = 0$
    \FORALL{$v \in V'$}
        \IF{$v \in A$}
        \STATE $a_c \gets a_c + 1$
        \ENDIF
    \ENDFOR    
    \STATE $b \gets \binom{a_c}{2} + a_c \cdot (|V'|-a_c)$ \label{Line:VulInComp}
    \STATE $h \gets h + b $ \label{Line:addToVul}
    \STATE $ g \gets g + (\binom{|E'|}{2} -b) $ \label{Line:addToHeal}
    \ENDFOR \label{Line:Score-EndFor}
    \STATE \textbf{return} $(|V|^2+1) \cdot h - g $ \label{Line:scoreReturn}
\end{algorithmic}
\end{algorithm}

\subsection{Solving SNP-V with Greedy Algorithm}
Due to the NP-hardness of SNP-V, finding an exact solution becomes intractable for larger instances. This section provides a simple scheme for a greedy heuristic that provides a trade-off between running time and solution quality. The scheme is based on Algorithm~\ref{alg:CyberSeg}. Note that Algorithm~\ref{alg:CyberSeg} finds an optimal set of~$k$ devices to remove in running time~$\mathcal{O}(|V \setminus D|^k \cdot (|V|+|E|))$; a running time that crucially depends on the deletion budget~$k$. In the following, we use this algorithm to solve instances with some pre-defined smaller budget~$x<k$. Intuitively, the algorithm does not aim to find the best `large' solution of size~$k$, but repeatedly finds the best `small' solution of size~$x<k$. This is done as many times as~$x$ fits into~$k$. Algorithm~\ref{alg:CyberSeg-Greedy} describes this heuristic.
\begin{algorithm}[h]
    \caption{CyberSeg-Greedy}
    \label{alg:CyberSeg-Greedy}
    \textbf{Input}: $G=(V,E)$, $A$, $k$, $x$\\
    \textbf{Output}: $C$
    \begin{algorithmic}[1] 
    \STATE $C \gets \emptyset$ \label{Line:EmptySol}
    \WHILE{$x<k$} \label{Line:BeginWhileGreedy}
        \STATE $C' \gets $\texttt{CyberSeg(}$G-C,A,x$\texttt{)} 
        \STATE $C \gets C~\cup~C'$
        \STATE $k \gets k-x$
    \ENDWHILE \label{Line:EndWhileGreedy}
    \STATE $C \gets C~\cup~$\texttt{CyberSeg(}$G-C,A,k$\texttt{)} \label{Line:RemainingGreedy}
    \STATE \textbf{return} $C$
\end{algorithmic}
\end{algorithm}

In Line~\ref{Line:EmptySol}, the solution~$C$ is initialized as an empty set. In Lines~\ref{Line:BeginWhileGreedy}--\ref{Line:EndWhileGreedy}, the current solution is extended by deleting the current best device set of size~$x$ as long as this fits into the deletion budget. Finally, in Line~\ref{Line:RemainingGreedy}, the solution~$C$ is extended by removing the remaining~$k\leq x$ devices. Observe that the greedy algorithm has a running time of~$|V|^x \cdot (|V|+|E|)^{\mathcal{O}(1)}$. It depends on the chosen parameter~$x$. That is, the larger~$x$, the slower the algorithm, while---on the other hand---larger values of~$x$ are more likely to provide a better solution quality. The choice of~$x$ outlines a trade-off between solution quality and running time.

\section{First Experimental Results}
This section describes our first empirical results. Three different datasets are used: the Zachary's karate club data set~\cite{Z77}, a synthetic data set, and a real world data set describing a cyber-physical production system from the Smart Factory OWL (SFOWL). Zachary's karate club dataset describes a social network of a karate club. It describes conflict and fission in small groups. The data set is not related to the topic of security, but is often used to analyze this type of problem. In this work, individuals are devices or attacked devices and social relationships are connections. The synthetic data set is created by a graph generator~\cite{storer2001introduction}. The generator creates a full~$r$-ary tree of 50 devices and a branching factor of 5. The generated topology simulates a possible network of automation components.
This topology is used as it comes close to a cyber-physical production system as illustrated in Figure~\ref{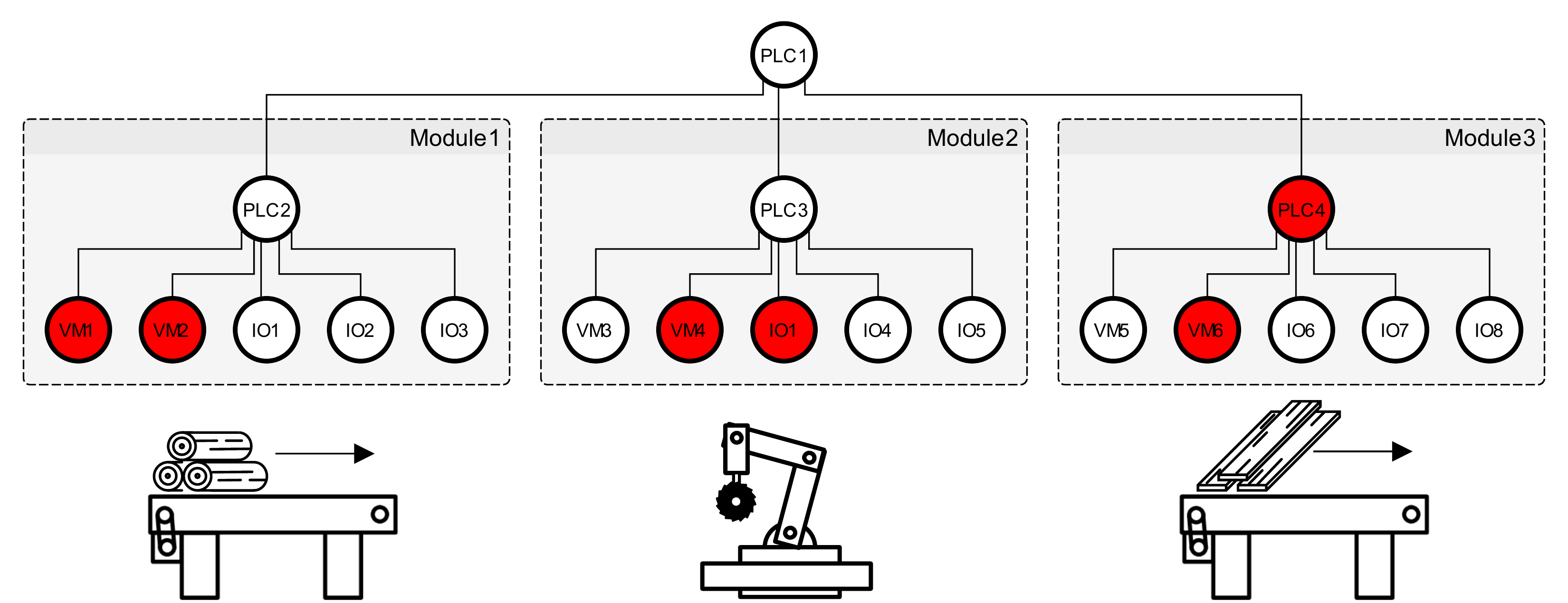}. This example consists of three production modules. Each module has a PLC, IO-Devices and virtual appliances, such as condition monitoring or optimization as described above. A central PLC controls the overall process. The SFOWL dataset describes a real-world production system with 288 devices and 737 connections, illustrated in Figure~\ref{fig_SFOWL.pdf}.

\bsfigurewidth[ht]{0.9}{fig_synthetic.pdf}{Synthetic dataset example}
\bsfigurewidth[ht]{0.5}{fig_SFOWL.pdf}{SFOWL dataset}

\vspace{-1em}
Table~\ref{table_datasets} summarizes the metadata of the datasets. For each graph, three instances are generated by randomly choosing the set~$A$ of attacked devices. The value~$p \in \{ 0.1, 0.25, 0.5 \}$ corresponds to the fraction of attacked devices. More precisely, in a graph with~$n$ devices, we have~$\lceil p \cdot n \rceil$ attacked devices. These attacked devices are sampled uniformly.

The introduced SNP-V-ILP formulation (see Definition~\ref{def:snp-ilp}) is described by the Pyomo open-source optimization modeling language and the used solver is Gurobi. The CyberSeg implementations (see Algorithm~\ref{alg:CyberSeg} and Algorithm~\ref{alg:CyberSeg-Greedy}) are based on Python programming language. Each combination can be computed in parallel and each computation was performed on an Apple M2 Max with 12 cores. SNP-V-ILP and CyberSeg-Direct have a timeout limit set to 600 seconds, as we assume that this limit enables a timely response in a real-world scenario.

\begin{table}[!ht]
    \fontsize{8pt}{8pt}\selectfont
    \setlength{\tabcolsep}{3pt}
    \setlength{\extrarowheight}{1pt}
    \begin{center}
    \begin{tabular}{llccccc}
        \toprule
        \textbf{Dataset} & $p$ & \textbf{Devices} &  \textbf{d-1} & \textbf{Connections} & $v$ & $h$ \\
        \midrule
        \multirow{3}{*}{Karate}         & 0.1 & 34 & 1 & 78 & 96 & 465 \\
                                        & 0.25 & 34 & 1 & 78 & 236 & 325 \\
                                        & 0.5 & 34 & 1 & 78 & 425 & 136 \\\hline\hline
        \multirow{3}{*}{Synthetic}      & 0.1 & 50 & 38 & 49 & 235 & 990 \\
                                        & 0.25 & 50 & 31 & 49 & 522 & 703 \\
                                        & 0.5 & 50 & 19 & 49 & 925 & 300 \\\hline\hline
        \multirow{3}{*}{SFOWL}          & 0.1 & 288 & 121 & 737 & 7917 & 33411 \\
                                        & 0.25 & 288 & 100 & 737 & 18108 & 23220 \\
                                        & 0.5 & 288 & 66 & 737 & 31032 & 10296 \\
        \bottomrule
    \end{tabular}
    \end{center}
    \vspace{-1em}
    \caption[]{Dataset descriptions: d-1=degree-one devices in~$V \setminus A$, $v$=$A$-vulnerability, $h$=$A$-healthiness}%
    \setlength\belowcaptionskip{10em}
    \label{table_datasets}
    \end{table}
Table~\ref{table:v_h} summarizes a solutions quality analysis.
Note that CyberSeg-Direct and SNP-V-ILP both provide optimal solutions and therefore the same solution quality. As a consequence, the union of both results appears in the table as~\emph{Exact} and the individual timeouts are indicated.

\medskip
\textbf{SNP-V-ILP: }%
Independent of the budget~$k$ and the value of~$p$, solving SNP-V-ILP computes a solution within less than one minute when given the Karate and the Synthetic network. In contrast, for all instances on the real-world network SFOWL, no solution was found within the time limit of~600 seconds. The results show that in many instances it is not necessary to delete all attacked devices in order to remove all vulnerable connections. In detail, for~Karate with $p=0.5$, Synthetic with $p \in \{0.1, 0.25, 0.5\}$, it is sufficient to remove~$60\%$,~$80\%$,~$69\%$, and~$36\%$ of the total number of attacked devices, respectively. Note that the removed devices are not necessarily attacked devices, but devices that provide a high connectivity between attacked devices and the remaining graph. In the other two cases Karate with $p \in \{0.1, 0.25\}$, the results show that all attacked devices need to be removed to obtain $A$-vulnerability of~$v=0$. A further interesting observation can be made in case of~Synthetic with $p=0.5$, it is sufficient to remove nine devices to obtain $A$-vulnerability of~$v=0$, but removing ten devices also provides a better healthiness. Based on the solutions of the small networks, we find that the results indicate that solving SNP-V provides a good alternative to naively isolating all attacked devices, as it is sufficient to only shut down a few attacked devices. Thus, the computational problem SNP-V is a promising model for our practical use-case. However, it seems that the practical relevance of solving SNP-V via ILP is very limited as the computation takes too long. For the real-world network SFOWL, no solution was found even in cases with very small budgets~$k$. Summarizing, aiming for practical applications, it requires for more efficient algorithms and heuristics to solve SNP-V.

\medskip
\textbf{CyberSeg-Direct: }%
Since CyberSeg provides exact solutions like the SNP-V-ILP, the resulting $A$-vulnerability and $A$-healthiness are the same. Thus, we only discuss the running time of CyberSeg in this paragraph. For the solution quality, we refer to the previous paragraph discussing the SNP-V-ILP results. For the networks Karate and Synthetic, the algorithm was able to provide solutions for all~$k \in \{1,...,8\}$ within the time limit and ran into time-out for~$k \in \{9,10\}$. In case of~SFOWL, it computed solutions for~$k \in \{1,2,3\}$.
Recall that the theoretical running time of CyberSeg crucially depends on~$k$, which matches this experimental evaluation. While this algorithm exploits structural characteristics like degree-one-devices and small budgets, its practical relevance is limited as in the case of the SNP-V-ILP. However, in contrast to the SNP-V-ILP, the performance for~$k \leq 3$ can be seen as a promising subroutine that can be used as a plug-in for the greedy heuristic.

\medskip
\textbf{CyberSeg-Greedy: }%
Recall that CyberSeg-Direct is a subroutine in CyberSeg-Greedy which is evaluated for budgets~$x<k$. Motivated by the running times discussed in the previous paragraph, we performed experiments with~$x=3$ on all instances. On small instances with~$p=0.1$, the greedy algorithm always found an optimal solution. On small instances with~$p\in\{0.25,0.5\}$, the solution is close to the optimum. Observe that the budgets required to find a solution with~$A$-vulnerability~zero is at most the optimal budget plus two. In case of the SFOWL network with budget~$k=10$, the greedy algorithm found solutions decreasing the vulnerability by~$86\%$, $80\%$, and~$82\%$ in case of~$p=0.1$, $p=0.25$, and~$p=0.5$, respectively. In case of SFOWL with $p=0.1$ further experiments for larger values of~$k$ to determine for which budget a solution with vulnerability zero was found. With~$k=22$, the heuristic found a solution with vulnerability~$0$ and with healthiness~$3867$. That is, the greedy heuristic found a solution that removes~$75\%$ of the devices in comparison to the naive solution by simply removing all attacked devices. For the real-world data set, it is noticeable that the~$A$-healthiness values drop significantly compared to their initial ones. This is due to the fact that both~$A$-vulnerability and~$A$-healthiness decrease quadratically for larger~$k$ values and~$A$-healthiness is not taken into account by the algorithm as long as vulnerable nodes remain.

\begin{table}[!ht]
    \fontsize{8pt}{8pt}\selectfont
    \setlength{\tabcolsep}{1.3pt}
    \setlength{\extrarowheight}{1pt}
    \begin{center}  
        \begin{tabular}{c|c|cc|cc|cc|cc|cc|cc}
        \toprule
        \multicolumn{1}{c}{} & \multicolumn{1}{c}{} & \multicolumn{4}{c}{\textbf{Karate}} & \multicolumn{4}{c}{\textbf{Synthetic}} & \multicolumn{4}{c}{\textbf{SFOWL}} \\ 
        \multicolumn{1}{c}{} & \multicolumn{1}{c}{} & \multicolumn{2}{c}{Exact} & \multicolumn{2}{c}{Greedy} & \multicolumn{2}{c}{Exact} & \multicolumn{2}{c}{Greedy} & \multicolumn{2}{c}{Exact} & \multicolumn{2}{c}{Greedy} \\ \hline
        $k$ & $p$ & $v$ & $h$ & $v$ & $h$ & $v$ & $h$ & $v$ & $h$ & $v$ & $h$ & $v$ & $h$ \\ \hline
        1 & 0.1 & 63 & 465 & 63 & 465 & 14 & 341 & 14 & 341 & 6175$^\dag$ & 25203$^\dag$ & 6175 & 25203 \\
        2 & 0.1 & 31 & 465 & 31 & 465 & 9 & 331 & 9 & 331 & 4213$^\dag$ & 16294$^\dag$ & 4213 & 16294 \\
        3 & 0.1 & 0 & 465 & 0 & 465 & 4 & 321 & 4 & 321 & 3530$^\dag$ & 13865$^\dag$ & 3530 & 13865 \\\hline
        4 & 0.1 & 0 & 465 & 0 & 465 & 0 & 321 & 0 & 321 & - & - & 3159 & 13865 \\
        5 & 0.1 & 0 & 465 & 0 & 465 & 0 & 321 & 0 & 321 & - & - & 2924 & 13370 \\
        6 & 0.1 & 0 & 465 & 0 & 465 & 0 & 321 & 0 & 321 & - & - & 1808 & 8783 \\
        7 & 0.1 & 0 & 465 & 0 & 465 & 0 & 321 & 0 & 321 & - & - & 1519 & 8783 \\
        8 & 0.1 & 0 & 465 & 0 & 465 & 0 & 321 & 0 & 321 & - & - & 1172 & 5076 \\
        9 & 0.1 & 0$^\ddag$ & 465$^\ddag$ & 0 & 465 & 0 & 321 & 0 & 321 & - & - & 953 & 5076 \\
        10 & 0.1 & 0$^\ddag$ & 465$^\ddag$ & 0 & 465 & 0 & 321 & 0 & 321 & - & - & 821 & 4782 \\\hline\hline
        1 & 0.25 & 129 & 232 & 129 & 232 & 142 & 218 & 142 & 218 & 13294$^\dag$ & 16110$^\dag$ & 13294 & 16110 \\
        2 & 0.25 & 99 & 211 & 99 & 211 & 51 & 64 & 51 & 64 & 9919$^\dag$ & 10588$^\dag$ & 9919 & 10588 \\
        3 & 0.25 & 75 & 211 & 75 & 211 & 42 & 58 & 42 & 58 & 8614$^\dag$ & 8781$^\dag$ & 8614 & 8781 \\\hline
        4 & 0.25 & 39 & 44 & 52 & 211 & 25 & 51 & 33 & 52 & - & - & 7804 & 8131 \\
        5 & 0.25 & 26 & 19 & 30 & 211 & 19 & 51 & 24 & 46 & - & - & 7253 & 7629 \\
        6 & 0.25 & 9 & 232 & 9 & 211 & 14 & 51 & 19 & 46 & - & - & 5638 & 4953 \\
        7 & 0.25 & 5 & 232 & 5 & 211 & 9 & 51 & 14 & 46 & - & - & 4719 & 4468 \\
        8 & 0.25 & 0 & 325 & 1 & 210 & 2 & 1 & 9 & 46 & - & - & 3119 & 3129 \\
        9 & 0.25 & 0$^\ddag$ & 325$^\ddag$ & 0 & 210 & 0 & 45 & 4 & 46 & - & - & 2841 & 2974 \\
        10 & 0.25 & 0$^\ddag$ & 325$^\ddag$ & 0 & 210 & 0 & 75 & 0 & 40 & - & - & 2602 & 2898 \\\hline\hline
        1 & 0.5 & 282 & 79 & 282 & 79 & 257 & 98 & 257 & 98 & 22264$^\dag$ & 7140$^\dag$ & 22264 & 7140 \\
        2 & 0.5 & 207 & 79 & 207 & 79 & 92 & 23 & 92 & 23 & 15151$^\dag$ & 5356$^\dag$ & 15151 & 5356 \\
        3 & 0.5 & 144 & 56 & 144 & 56 & 77 & 23 & 77 & 23 & 12930$^\dag$ & 4465$^\dag$ & 12930 & 4465 \\\hline
        4 & 0.5 & 61 & 22 & 61 & 22 & 55 & 26 & 63 & 22 & - & - & 11930 & 4005 \\
        5 & 0.5 & 27 & 18 & 27 & 18 & 41 & 25 & 49 & 21 & - & - & 11054 & 3828 \\
        6 & 0.5 & 14 & 18 & 14 & 18 & 29 & 22 & 37 & 18 & - & - & 7889 & 2702 \\
        7 & 0.5 & 9 & 17 & 9 & 17 & 13 & 15 & 25 & 15 & - & - & 6902 & 2557 \\
        8 & 0.5 & 5 & 18 & 5 & 18 & 0 & 3 & 16 & 9 & - & - & 4462 & 1786 \\
        9 & 0.5 & 0$^\ddag$ & 17$^\ddag$ & 0 & 17 & 0$^\ddag$ & 21$^\ddag$ & 7 & 3 & - & - & 4088 & 1727 \\
        10 & 0.5 & 0$^\ddag$ & 18$^\ddag$ & 0 & 17 & 0$^\ddag$ & 55$^\ddag$ & 0 & 0 & - & - & 3729 & 1669 \\
        \bottomrule
        \end{tabular}
\end{center}
\vspace{-1em}
\caption[]{Solutions quality analysis: $v$=$A$-vulnerability, $h$=$A$-healthiness, $\dag$=ILP timeout, $\ddag$=CyberSeg-Direct timeout, $-$=timeout for both exact approaches}%
\label{table:v_h}
\end{table}

\section{Summary and Future Work}
This paper introduces CyberSeg, a novel approach to identify devices in cyber-physical production systems that should be isolated in the event of a cyber-attack. The \textsc{Security Node Problem with Vulnerable Vertices (SNP-V)} is presented for this. The necessary information, such as devices, attacked devices, and connection between devices, can be obtained from common security monitoring systems. SNP-V instances can be solved with CyberSeg and the results enable security officers to make decisions to reduce the impact of cyberattacks. CyberSeg helps to avoid worst-case decisions, such as shutting down the production system. Dependent on the production process, it takes days or weeks to restart the production system. We have shown that in all analyzed cases, the isolation of only a few devices leads to a healthy network. While this work provides a first step into studying SNP-V in context of security for cyber-physical production systems, there are several ways to extend this work. One idea is the formulation of SNP-V with specific domain knowledge to avoid isolation of devices critical for the production process. In cases where some devices are more expensive to be shut down than others, one may think of a problem version with weighted devices. Otherwise, in case of directed communication between devices one may need a problem version with directed graphs.

\bibliography{aaai24}

\end{document}